\definecolor{Gray}{gray}{0.85}
\definecolor{LightCyan}{rgb}{0.88,1,1}
\newcolumntype{a}{>{\columncolor{Gray}}c}
\newcolumntype{b}{>{\columncolor{white}}c}
\title{Strong Collapse for Persistence\footnote{This research has received funding from the
European Research Council (ERC) under the European Union’s Seventh Framework Programme (FP/2007-
2013) / ERC Grant Agreement No. 339025 GUDHI (Algorithmic Foundations of Geometry Understanding in
Higher Dimensions).}}
\titlerunning{Strong Collapse for Persistence} %optional, in case that the title is too long; the running title should fit into the top page column
\author{Jean-Daniel Boissonnat}{Universit\'e C\^ote d'Azur, INRIA, France}{Jean-Daniel.Boissonnat@inria.fr}{}{}
\author{Siddharth Pritam}{Universit\'e C\^ote d'Azur, INRIA, France}{siddharth.pritam@inria.fr}{}{}
\author{Divyansh Pareek}{Indian Institute of Technology Bombay, India }{divyansh@cse.iitb.ac.in}{}{}
\authorrunning{J-D.\,Boissonnat and S. Pritam and D. Pareek } %mandatory. First: Use abbreviated first/middle names. Second (only in severe cases): Use first author plus 'et. al.'
\subjclass{Mathematics of computing, Topological Data Analysis, Computational geometry}%Dummy classification -- please refer to \url{https://www.acm.org/publications/class-2012}}
\keywords{Computational Topology, Topological Data Analysis, Strong Collapse, Persistent homology}% mandatory: Please provide 1-5 keywords
\newcommand{\red}[1]{{\color{red}#1\normalcolor}}
\newcommand{\blue}[1]{{\color{blue}#1\normalcolor}}
\begin{document}

\maketitle

\begin{abstract}
We introduce  a fast and memory efficient approach to compute the persistent homology (PH) of a sequence of simplicial complexes.  The basic idea is to  simplify the complexes of the input sequence by using strong collapses, as introduced by J. Barmak and E. Miniam [DCG (2012)], and to compute the PH of an induced sequence of reduced simplicial complexes that has the same PH as the initial one.
Our approach has several salient features that distinguishes it from previous work.  It is not limited to  filtrations (i.e. sequences of nested  simplicial subcomplexes) but works for other types of sequences like towers and zigzags. To  strong collapse a simplicial complex, we only need to store the maximal simplices of the complex, not the full set of all its simplices, which saves a lot of space and time. Moreover,
the complexes in the sequence can be strong collapsed independently and in parallel. Finally, we can compromize between precision and time by choosing the number of simplicial complexes of the sequence we strong collapse.
As a result and as demonstrated by numerous experiments on publicly available data sets, our approach is extremely fast and memory efficient in practice.

\end{abstract}

\section{Introduction}
In this article, we address the problem of computing the Persistent Homology (PH) of a given sequence of simplicial complexes (defined precisely in Section 4) in an efficient way. It is known that computing persistence can be done in $O(n^\omega)$ time, where $n$ is the total number of simplices and $\omega \leq 2.4$ is the matrix multiplication exponent ~\cite{ZigzagMatrixTime, MatrixMult}.  In practice,  when dealing with massive and high-dimensional datasets, $n$ can be very large (of order of billions) and computing PH is then  very slow and memory intensive. Improving the performance of PH computation  has therefore become an important research topic in Computational Topology and Topological Data Analysis. 

Much progress has been accomplished  in the recent years in two directions. First, a number of clever implementations and  optimizations have led to  a new generation of software for PH computation~\cite{Gudhi,phat,ripser,dionysus}. Secondly, a complementary direction has been explored to reduce the size of the complexes in the sequence while preserving (or approximating in a controlled way) the persistent homology of the sequence. Examples are the work of Mischaikow and Nanda ~\cite{Mischaikow} who use Morse theory to reduce the size of a filtration, and the work of Dłotko and Wagner  who use simple collapses~\cite{PawelSimpleColl}. Both methods compute the exact PH of the input sequence. Approximations can also be computed with theoretical guarantees. Approaches like interleaving smaller and easily computable simplicial complexes, and sub-sampling of the point sample works well upto certain approximation factor ~\cite{ChazalOudot,Botnan,SheehyRipsComp, KerberSharath, Aruni, Simba}. 

In this paper, we introduce a new approach to  simplify the complexes of the input sequence which uses the notion of strong collapse introduced by J. Barmak and E. Miniam \cite{StrongHomotopy}.  
% . Each complex in the sequence is reduced to a typically much smaller simplicial complex and the reduced complexes are then connected by simplicial maps. The PH of the new sequence is the same  as the PH of the initial sequence but its computation is much faster.  This is possible by using 
Specifically, our approach % to simplify a \textit{zigzag sequence} 
can be summarized as follows. Given a sequence $\mathcal{Z}$ : $\{K_1 \xrightarrow{f_1} K_2 \xleftarrow{g_2} K_3 \xrightarrow{f_3} \cdots \xrightarrow{f_{(n-1)}} K_n\}$ of simplicial complexes $K_i$ connected through simplicial maps $\{ \xrightarrow{f_i}$ or $ \xleftarrow{g_j} \}$,  we independently strong collapse the complexes of the sequence to reach a sequence $\mathcal{Z}^c$ : $\{K_1^c \xrightarrow{f_1^c} K_2^c \xleftarrow{g_2^c} K_3^c \xrightarrow{f_3^c} \cdots \xrightarrow{f_{(n-1)}^c} K_n^c\}$, with \textit{induced simplicial maps}  $\{ \xrightarrow{f_i^c}$ or $ \xleftarrow{g_j^c} \}$ (defined in Section 4). The complex $K_i^c$ is called the \textbf{core} of the complex $K_i$ and we call the sequence $\mathcal{Z}^c$ the \textbf{core sequence} of $\mathcal{Z}$. We show that one can compute the PH of the sequence $\mathcal{Z}$ by computing the PH of the core sequence $\mathcal{Z}^c$, which is of much smaller size. %\jd{not quite true. depends on how snapshots are chosen}

Our method has some similarity with the work of Wilkerson et. al.~\cite{StrongCollPerst} who also use strong collapses to reduce PH computation  but it differs in three essential aspects: it is not limited to  filtrations (i.e. sequences of nested  simplicial subcomplexes) but works for other types of sequences like towers and zigzags.  It also differs in the way strong collapses are computed and in the manner PH is computed. 

A first central observation is that to strong collapse a simplicial complex $K$, we only need to store its maximal simplices (i.e. those simplices that have no coface).  The number of maximal simplices is smaller than the total number of simplices by a factor that is exponential in the dimension of the complex. It is linear in the number of vertices for a variety of complexes~\cite{gamma2}. Working only with maximal simplices dramatically reduces the time and space complexities compared to the algorithm of~\cite{StrongCollKrim}. We prove that the complexity of our algorithm is $\mathcal{O}(v^2\Gamma_0d+ m^2\Gamma_0d)$. Here $d$ is the dimension of the complex, $v$ is the number of vertices, $m$ is the number of maximal simplices and $\Gamma_0$ is an upper bound on the number of maximal simplices incident to a vertex. As observed in~\cite{gamma,gamma2}, usually $m$ is  much smaller than the total number of simplices and $\Gamma_0$ is much smaller than $m$ (see Section~\ref{sec:algo} for a discussion).

We now consider PH computation.  All PH algorithms take as input a full representation of the complexes. % and their complexity is polynomial in the total number of simplices of the complexes.
We thus have to convert the representation by maximal simplices used for  strong collapses into a full representation of the complexes, which takes exponential time in the dimension (of the collapsed complexes). This exponential burden is to be expected since it is 
 known that computing PH is NP-hard when the complexes are represented by their maximal faces~\cite{NpHardHomology}. Nevertheless, we demonstrate in this paper that strong collapses combined with known persistence algorithms lead to major improvements over previous methods to compute the PH of a sequence. This is due in part to the fact that strong collapses reduce the size of the complexes on which persistence is computed.   Two other factors also play a role:

-- The collapses of the complexes in the sequence can be performed independently and in parallel. This is due to the fact that strong collapses can be expressed as simplicial maps unlike simple collapses \cite{Whitehead}.

-- The size of the complexes in a sequence  does not grow by much in terms of maximal simplices, as observed in many practical cases.  % In fact, in a filtration it starts to reduce after a certain time.
As a consequence, the time to collapse the $i$-th simplicial complex $K_i$ in the sequence is almost independent of $i$.
% takes a nearly constant time at each stage (collapse of $K_i$ for each $i$). This is particularly true when we use the strong collapses introduced in \cite{StrongHomotopy} together with the algorithm developed in this article. 
For filtrations, this is a clear advantage over methods that use a full representation of the complexes and suffer an increasing cost as $i$ increases. 

-- Using our approach, one can compute the exact PD or a certified approximation by strong collapsing only  a subset of the simplicial complexes of the sequence.  We can thus compromize between precision and time.

As a result,  our approach is extremely fast and memory efficient in practice as demonstrated by numerous experiments on publicly available data sets.

An outline of this paper is as follows. Section 2 recalls the basic ideas and constructions related to simplicial complexes and strong collapses. We describe our core algorithm in Section 3. In Section 4, we prove that zigzag modules are preserved  under strong collapse. In Section 5, we provide experimental results and we end with a short discussion in Section 6.

\section{Preliminaries}
In this section,  we provide a brief review of the notions of simplicial complex and strong collapse as introduced in~\cite{StrongHomotopy}. We assume some familiarity with basic concepts like homotopic maps, homotopy type, homology groups and other algebraic topological notions. Readers can refer to \cite{Hatcher} for a comprehensive introduction of these topics.
\subparagraph{Simplex, simplicial complex and simplicial map :}

An \textbf{abstract simplicial complex} $\textit{K}$ is a collection of subsets of a non-empty finite set $\textit{X},$ such that for every subset $\textit{A}$ in $\textit{K}$, all the subsets of $\textit{A}$ are in $\textit{K}$. From now on we will call an \textit{abstract simplicial complex} simply a \textit{simplicial complex} or just a \textit{complex}. An element of $\textit{K}$ is called a \textbf{simplex}. An element of cardinality $k+1$ is called a $k$-simplex and % has cardinality $(k+1)$, where 
$k$ is called its \textbf{dimension}. A simplex is called \textbf{maximal} if it is not a proper subset of any other simplex in $\textit{K}$. A sub-collection  $\textit{L}$ of $\textit{K}$ is called a \textbf{subcomplex}, if it is a simplicial complex itself. $\textit{L}$ is a \textbf{full subcomplex} if it contains all the simplices of $\textit{K}$ that are spanned by the vertices (0-simplices) of the subcomplex $\textit{L}$.

A vertex to vertex map $\psi : K \rightarrow L$ between two simplicial complexes is called a \textbf{simplicial map}, if the images of the vertices of a simplex always span a simplex. Simplicial maps are thus determined by the images of the vertices. In particular, there is a finite number of simplicial maps between two given finite simplicial complexes. Simplicial maps induce continuous maps between the underlying \textit{geometric realisations} of the simplicial complexes. {Two simplicial maps $\phi : K \rightarrow L$ and $\psi : K \rightarrow L$ are contiguous if, for all $\sigma \in K$, $\phi(\sigma) \cup \psi(\sigma) \in L$. Two contiguous maps are known to be homotopic ~\cite[Theorem 12.5]{Munkres}}.

\subparagraph{Dominated vertex: }
Let $\sigma$ be a simplex of a simplicial complex $K$, the \textbf{closed star} of $\sigma$ in $K$, $st_K(\sigma)$ is a subcomplex of $K$ which is defined as follows,
$ st_K(\sigma) := \{ \tau \in K | \hspace{5px} \tau \cup \sigma \in K \}.$
The \textbf{link} of $\sigma$ in $K$, $lk_K(\sigma)$ is defined as the set of simplices in $st_K(\sigma)$ which do not intersect with $\sigma$,
$ lk_K(\sigma) := \{ \tau \in st_K(\sigma) | \tau \cap \sigma = \emptyset \}.$

Taking a join with a vertex transforms a simplicial complex into a simplicial cone. Formally if $L$ is a simplicial complex and $a$ is a vertex not in $L$ then the \textbf{simplicial cone} $aL$ is defined as  $aL := \{ a,\hspace{2px} \tau \hspace{2px}|\hspace{5px} \tau \in L \hspace{5px} or \hspace{5px} \tau = \sigma \cup a; \hspace{5px} where \hspace{5px} \sigma \in L \}$. A vertex $v$ in $K$ is called a \textbf{dominated vertex} if the link of $v$ in $K$, $lk_K(v)$ is a simplicial cone, that is, there exists a vertex $v^{\prime} \neq v$ and a subcomplex $L$ in $K$, such that $lk_K(v) = v^{\prime}L$. We say that the vertex  $v^{\prime}$ is \textit{dominating} $v$ and $v$ is \textit{dominated} by $v^{\prime}$. The symbol \textbf{$K \setminus v$} (deletion of $v$ from $K$) refers to the subcomplex of $K$ which has all simplices of $K$ except the ones containing $v$. Below is an important remark from ~\cite[Remark 2.2]{StrongHomotopy}, which proposes an alternative definition of dominated vertices.

\label{Rem:equDef}\textbf{Remark 1:} A vertex $v \in K$ is dominated by another vertex $v^{\prime} \in K$, \textit{if and only if} all the maximal simplices of $K$ that contain $v$ also contain $v^{\prime}$ \cite{StrongHomotopy}.

\subparagraph{Strong collapse: } An \textbf{elementary strong collapse} is the deletion of a dominated vertex $v$ from $K$, which we denote with $K$ ${\searrow\searrow}^e$ $K\setminus v$. \autoref{fig:elementrycollapse} illustrates an easy case of an elementary strong collapse. There is a \textbf{strong collapse} from a simplicial complex $K$ to its subcomplex $L$, if there exists a series of elementary strong collapses from $K$ to $L$, denoted as $K$ ${\searrow\searrow}$ $L$. The inverse of a strong collapse is called a \textbf{strong expansion}. If there exists a combination of strong collapses and/or strong expansion from $K$ to $L$ then $K$ and $L$ are said to have the same \textbf{strong homotopy type}.

The notion of strong homotopy type is stronger than the notion of simple homotopy type in the sense that if $K$ and $L$ have the same strong homotopy type, then they have the same simple homotopy type, and therefore the same homotopy type \cite{StrongHomotopy}. There are examples of contractible or simply collapsible simplicial complexes that are not strong collapsible.

 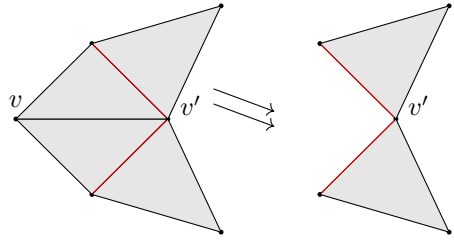
\begin{figure}
 \begin{center}
 \begin{tikzpicture}
 [scale=2, vertices/.style={draw, fill=black, circle, inner sep=0.5pt}]
 \node[vertices] (a) at (-0.5,-0.5) {};
 \node[vertices, label =$v'$] (b) at (0,0) {};
 \node[vertices] (c) at (-0.5,0.5) {};
 \node[vertices, label =$v$] (d) at (-1.0,0.0) {};
 \node[vertices] (e) at (0.35,0.75){};
 \node[vertices] (f) at (0.35,-0.75){};
 \node[ label =$v'$] (p) at (0.15,-0.1) {};
 
 \filldraw[fill=black!10, draw=black] (-0.5,-0.5)--(0,0)--(-1.0,0.0)--cycle ;
 \filldraw[fill=black!10, draw=black] (-0.5,0.5)--(0,0)--(-1.0,0.0)--cycle;
 \filldraw[fill=black!10, draw=black] (0,0)--(-0.5,0.5)--(0.35,0.75)--cycle;
 \filldraw[fill=black!10, draw=black] (0,0)--(-0.5,-0.5)--(0.35,-0.75)--cycle;
 \draw[draw =red] (a)--(b);
 \draw[draw =red] (c)--(b);
 \draw [->] (0.3,0.2) -- (0.7,0.05); 
 \draw [->] (0.3,0.10) -- (0.7,-0.05);

 \node[vertices] (k) at (1.0, -0.5) {};
 \node[vertices, label =$v^\prime$] (l) at (1.5, 0) {};
 \node[vertices] (m) at (1.0, 0.5) {};
 \node[vertices] (n) at (1.85,0.75){};
 \node[vertices] (o) at (1.85,-0.75){};
 \node[ label =$v'$] (q) at (1.65,-0.1) {};
 
 \filldraw[fill=black!10, draw=black] (1.5,0)--(1.0,-0.5)--(1.85,-0.75)--cycle;
 \filldraw[fill=black!10, draw=black] (1.5,0)--(1.0,0.5)--(1.85,0.75)--cycle;
 \draw [draw =red](k)--(l);
 \draw [draw =red](l)--(m);
 \end{tikzpicture}
 \end{center}
 \caption{Illustration of an \textit{elementary strong collapse}. In the complex on the left, $v$ is dominated by $v'$. The link of $v$ is highlighted in {\color{red} red}. Removing $v$ leads to the complex on the right.}
 \label{fig:elementrycollapse}
 \end{figure}

A complex without any dominated vertex will be called a \textbf{minimal complex}. A \textbf{core} of a complex $K$ is a minimal subcomplex $K^c \subseteq K$, such that $K$ ${\searrow\searrow}$ $K^c$. \textit{Every simplicial complex has a \textbf{unique core} up to isomorphism. The core decides the strong homotopy type of the complex}, and two simplicial complexes have the same strong homotopy type \textit{if and only if} they have isomorphic cores~\cite[Theorem 2.11]{StrongHomotopy}.

\subparagraph{Retraction map:} If a vertex $v \in K$ is dominated by another vertex $v^{\prime} \in K$, the vertex map $r : K \rightarrow K \setminus v$ defined as: $r(w) = w$ if $w \neq v$ and $r(v) = v^{\prime}$, induces a simplical map that is a \textit{retraction} map. The homotopy between $r$ and the identity $i_{K \setminus v}$ over $K \setminus v$ is in fact a strong deformation retract. {Furthermore, the composition $(i_{K \setminus v})r$ is contiguous to the identity $i_{K}$ over $K$~\cite[Proposition 2.9]{StrongHomotopy}}.  \label{map:retmap}

\subparagraph{Nerve of a simplicial complex: } \label{sec:Nerve}
A closed \textbf{cover} $\mathcal{U}$ of a topological space $\mathcal{X}$ is a set of closed sets of $\mathcal{X}$ such that $\mathcal{X}$ is a subset of their union. The \textbf{nerve} of a cover $\mathcal{U}$ is an abstract simplicial complex, defined as the set of all non-empty intersections of the elements of $\mathcal{U}$. The nerve is a well known construction that transforms a continuous space to a combinatorial space preserving its homotopy type. The \textit{nerve} $\mathcal{N}(K)$ of a simplicial complex $K$ is defined as the nerve of the set of maximal simplices of the complex $K$ (considered as a cover of the complex). Hence all the maximal simplices of $K$ will be the vertices of $\mathcal{N}(K)$ and their non-empty intersection will form the simplices of $\mathcal{N}(K)$. For $j \geq 2$ the iterative construction is defined as $\mathcal{N}^j(K) =\mathcal{N}(\mathcal{N}^{j-1}(K))$. This definition of nerve preserves the homotopy type, $K \simeq \mathcal{N}(K)$\cite{StrongHomotopy}. A remarkable property of this nerve construction is its connection with strong collapses.

Taking the nerve of any simplicial complex $K$ twice corresponds to a strong collapse.

\begin{theorem}~\cite[Proposition 3.4]{StrongHomotopy} \label{Thm:NerveSquare}
For a simplicial complex $K$, there exists a subcomplex $L$ isomorphic to $\mathcal{N}^2(K)$, such that $K {\searrow\searrow} L$.
\end{theorem}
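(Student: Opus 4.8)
The plan is to exhibit an explicit full subcomplex $L \subseteq K$, built from the combinatorics of the maximal simplices of $K$, and then to prove separately that $L \cong \mathcal{N}^2(K)$ and that $K {\searrow\searrow} L$. For every vertex $v$ of $K$, write $M(v)$ for the set of maximal simplices of $K$ that contain $v$. Remark~1 says exactly that $v$ is dominated by $v'$ if and only if $v \neq v'$ and $M(v) \subseteq M(v')$, so the sets $M(v)$ are the right bookkeeping device for both domination and nerves.

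First I would translate $\mathcal{N}^2(K)$ into this language. By definition the vertices of $\mathcal{N}(K)$ are the maximal simplices of $K$, and a family of them spans a simplex exactly when it has a common vertex. A short argument shows that the maximal simplices of $\mathcal{N}(K)$ are precisely the sets $M(v)$ that are $\subseteq$-maximal among all the $M(u)$: if $M(v)$ is not maximal, a strictly larger $M(u)$ supplies a maximal simplex of $K$ through $u$ but not through $v$ that can be adjoined, while if $M(v)$ is maximal, any maximal simplex meeting $\bigcap M(v)$ already contains $v$, so nothing can be adjoined. Hence the vertices of $\mathcal{N}^2(K) = \mathcal{N}(\mathcal{N}(K))$ are the distinct $\subseteq$-maximal values of $M$. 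I would then pick one vertex $r$ of $K$ realizing each such value, let $V_L$ be this set of representatives, and let $L$ be the full subcomplex of $K$ spanned by $V_L$.

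To prove $L \cong \mathcal{N}^2(K)$ I would use the map $r \mapsto M(r)$. A set $\{r_0, \dots, r_k\} \subseteq V_L$ spans a simplex of $L$ iff it is contained in some maximal simplex of $K$, which is literally the statement that $\bigcap_j M(r_j) \neq \emptyset$, i.e.\ that the corresponding vertices $M(r_j)$ of $\mathcal{N}(K)$ have nonempty common intersection and therefore span a simplex of $\mathcal{N}^2(K)$. Thus $r \mapsto M(r)$ is a simplicial isomorphism. The remaining task, and the heart of the argument, is the strong collapse $K {\searrow\searrow} L$. Every discarded vertex $w \in V(K) \setminus V_L$ is dominated by a retained representative: if $w$ lies in a $\subseteq$-maximal class it is dominated by that class's representative, and otherwise $M(w)$ sits below some maximal value $M(r^\ast)$, so $w$ is dominated by the representative $r^\ast$.

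The main obstacle is to remove all of these vertices in sequence, since each elementary collapse changes the complex and could in principle destroy domination relations needed later. I would handle this with a persistence lemma: if $w$ is dominated by $v$ in a complex $K'$ and $u \notin \{v,w\}$, then $w$ is still dominated by $v$ in $K' \setminus u$. This follows from Remark~1 by checking that any maximal simplex $\tau$ of $K' \setminus u$ through $w$ equals $\sigma \setminus u$ for some maximal simplex $\sigma$ of $K'$ through $w$, whence $v \in \sigma$ and, since $v \neq u$, also $v \in \tau$. Because every representative is retained throughout, this lemma lets me delete the vertices of $V(K) \setminus V_L$ one at a time in any order, each deletion being an elementary strong collapse; this yields $K {\searrow\searrow} L \cong \mathcal{N}^2(K)$ and completes the proof.
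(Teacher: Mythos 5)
Your proof is correct, and it takes essentially the same route as the paper's own (constructive) argument: identify the vertices of $\mathcal{N}^2(K)$ with representatives of the $\subseteq$-maximal sets $M(v)$, note that the full subcomplex $L$ on these representatives is isomorphic to $\mathcal{N}^2(K)$, and delete the remaining vertices one at a time as elementary strong collapses --- this is precisely what the paper's matrix algorithm does when it removes dominated rows, which is why the paper remarks that the algorithm ``provides a constructive proof'' of this theorem. Your persistence lemma is a clean explicit justification of the order-independence of the deletions, a point the paper handles implicitly by re-testing domination in the current complex at each step rather than fixing the dominating representatives in advance.
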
 

An easy consequence of this theorem is that a complex $K$ is \textit{minimal} if and only if it is isomorphic to $\mathcal{N}^2(K)$~\cite[Lemma 3.6]{StrongHomotopy}. This means that we can keep collapsing our complex $K$  by applying $\mathcal{N}^2(.)$ iteratively until we reach the core of the complex $K$. The sequence $K,\mathcal{N}^2(K),...,\mathcal{N}^{2p}(K)$ is a decreasing sequence in terms of number of simplices.

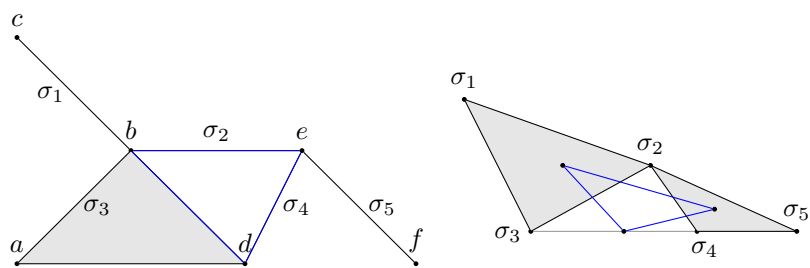
\begin{figure}
\begin{center}
\begin{tikzpicture}
[scale=3, vertices/.style={draw, fill=black, circle, inner sep=0.5pt}]
\node[vertices, label =$a$] (a) at (-0.5,-0.5) {};
\node[vertices, label =$b$] (b) at (0,0) {};
\node[vertices, label =$c$] (c) at (-0.5,0.5) {};
\node[vertices, label =$d$] (d) at (0.5,-0.5) {};
\node[vertices, label =$e$] (e) at (0.75,0) {};
\node[vertices, label =$f$] (f) at (1.25,-0.5) {};

\filldraw[fill=black!10, draw=black] (-0.5,-0.5)--(0,0) node[midway,right] {$\sigma_3$} --(0.5,-0.5)--cycle ;
\draw(b)--node[midway,left] {$\sigma_1$}(c);
\draw(b)--node[midway,above] {$\sigma_2$}(e);
\draw(d)--node[midway,right] {$\sigma_4$}(e);
\draw(e)--node[midway,right] {$\sigma_5$}(f);

\draw[draw =blue] (b)--(d);
\draw[draw =blue] (d)--(e);
\draw[draw =blue] (b)--(e);

\end{tikzpicture}
\begin{tikzpicture}
[scale=3.5, vertices/.style={draw, fill=black, circle, inner sep=0.5pt}]
\node[vertices, label = {[shift={(-0.3,-0.3)}]$\sigma_3$}] (f) at (0,-0.25) {};
\node[vertices, label = $\sigma_1$] (g) at (-0.25,0.25) {};
\node[vertices, label = $\sigma_2$] (h) at (0.45, 0) {};
\node[vertices, label = {[shift={(0.1,-0.5)}]$\sigma_4$}] (i) at (0.625,-0.25) {};
\node[vertices, label = $\sigma_5$] (j) at (1,-0.25) {};

\filldraw[fill=black!10, draw=black] (0,-0.25)--(-0.25,0.25)--(0.45, 0)--cycle ;
\filldraw[fill=black!10, draw=black] (0.45, 0)--(0.625,-0.25)--(1,-0.25)--cycle;
\draw [fill=black!10, draw=black, opacity=0.4](f)--(i);

\node[vertices] (k) at (0.12,0) {};
\node[vertices] (l) at (0.691, -0.166) {};
\node[vertices] (m) at (0.35,-0.25) {};

\draw [draw =blue](k)--(l);
\draw [draw =blue](l)--(m);
\draw [draw =blue](k)--(m);

\end{tikzpicture}
\end{center}
\caption{Left: $K$ (in grey), Right: $\mathcal{N}(K)$ (in grey) and $\mathcal{N}^2(K)$ (in \blue{blue}). $\mathcal{N}^2(K)$ is isomorphic to a full-subcomplex of $K$ highlighted in \blue{blue} on the left.}
\label{fig:threecomplexes}
\end{figure}

\section{Strong collapse of a simplicial complex}
\label{sec:algo}
In this section, we describe an algorithm to strong collapse a simplicial complex $K$, provide the details of the implementation and analyze its complexity. We construct $\mathcal{N}^2(K)$ as defined in Section \ref{sec:Nerve}.

\subparagraph{Data structure:}
Basically, we represent $K$ as the adjacency matrix $M$ between the vertices and the maximal simplices of $K$. We will simply call $M$ the adjacency matrix of $K$. The rows of $M$ represent the vertices and the columns represent the maximal simplices of $K$. For convenience, we will identify a row (resp. column) and the vertex (resp. maximal simplex) it represents. An entry $M[v_i][\sigma_j]$ associated with a vertex $v_i$ and a maximal simplex $\sigma_j$ is set to $1$ if $v_i \in \sigma_j$, and to $0$ otherwise. For example, the matrix $M$ in the left of the \autoref{tab:algoTable} corresponds to the leftmost simplicial complex $K$ in \autoref{fig:threecomplexes}. Usually, $M$ is very sparse. Indeed, each column contains at most $d+1$ non-zero elements since the simplices of a $d$-dimensional complex have at most $d+1$ vertices, and each line contains at most $\Gamma_0$ non-zero elements where $\Gamma_0$ is an upper bound on the number of maximal simplices incident to a given vertex. As already mentionned, in many practical situations, $\Gamma_0$ is a small fraction of the number of maximal simplices. It is therefore beneficial to store $M$ as a list of vertices and a list of maximal simplices. Each vertex $v$ in the list of vertices points to the maximal simplices that contain $v$, and each simplex in the list of maximal simplices points to  its vertices.  This data structure is similar to the SAL data structure of~\cite{gamma}.

\subparagraph{Core algorithm:}
Given the adjacency matrix $M$ of $K$, we compute the adjacency matrix $C$ of the \textit{core} $K^c$. It turns out that using basic row and column removal operations, we can easily compute $C$ from $M$. Loosely speaking our algorithm recursively computes $\mathcal{N}^2(K)$ until it reaches $K^c$. 

The columns of $M$ (which represent the maximal simplices of $K$) correspond to the vertices of $\mathcal{N}(K)$. Also, the columns of $M$ that have a non-zero value in a particular row $v$ correspond to the maximal simplices of $K$ that share the vertex associated with row $v$. Therefore, 
 each row  of $M$ represents a simplex of the nerve $\mathcal{N}(K)$. 
 Not all simplices of $\mathcal{N}(K)$ are associated with rows of $M$ but all maximal simplices are since they correspond to subsets of maximal simplices with a common vertex.
To remedy this situation, we \textit{remove} all the rows of $M$ that correspond to non-maximal simplices of $\mathcal{N}(K)$. This results in a new smaller matrix $M$ whose transpose, noted $\mathcal{N}(M)$,  is the adjacency matrix of the nerve $\mathcal{N}(K)$. 
We then exchange the roles of rows and columns (which is the same as taking the transpose) and run the very same procedure as before so as to obtain the adjacency matrix $\mathcal{N}^2(M)$ of $\mathcal{N}^2(K)$. 

The process is iterated as long as the matrix can be reduced.  Upon termination, we output the reduced matrix $C := \mathcal{N}^{2p}(M),$ for some $p \geq 1$, which is the adjacency matrix  of the core $K^c$ of $K$. Removing a row or column is the most basic operation of our algorithm. We will discuss it in more detail later in the paragraph \textit{Domination test}.
%In practice, we avoid the transpose operations and instead exchange the roles of rows and columns. %Algorithm \autoref{alg:core} is the pseudo code of our aforementioned algorithm. 

\begin{table}
\begin{center}
\scalebox{1.15}{
  \begin{tabular}{ | l | l | a | l |  l | l | }
    \hline
     $ $  & $\sigma_1$ & $\sigma_2$ & $\sigma_3$ & $\sigma_4$ & $\sigma_5$ \\ \hline
     a & 0 & 0 & 1 & 0 & 0 \\ \hline
	 b & 1 & 1 & 1 & 0 & 0 \\ \hline
     c & 1 & 0 & 0 & 0 & 0 \\ \hline
     d & 0 & 0 & 1 & 1 & 0 \\ \hline
	 \rowcolor{brown}      
     e & 0 & 1 & 0 & 1 & 1 \\ \hline
     f & 0 & 0 & 0 & 0 & 1 \\ 
    \hline
  \end{tabular}
\begin{tabular}{ | l | l | l | l |  l | l | }
    \hline
      $ $  		&	b 	&	d 	&	e 	\\ \hline
	 $\sigma_1$ 	&	1	&	0	&	0	\\ \hline
 	$\sigma_2$ 	&	1	&	0	&	1	\\ \hline
 	$\sigma_3$ 	&	1	&	1	&	0	\\ \hline
 	$\sigma_4$ 	&	0	&	1	&	1	\\ \hline
 	$\sigma_5$ 	&	0	&	0	&	1	\\ \hline
  \end{tabular}
\begin{tabular}{ | l | l | l |  l | }
    \hline
     $ $  & $\sigma_2$ & $\sigma_3$ & $\sigma_4$  \\ \hline
     b    & 1 & 1 & 0  \\ \hline
     d    & 0 & 1 & 1  \\ \hline
     e    & 1 & 0 & 1  \\ 
     \hline
     
  \end{tabular}}
\end{center}
\caption{From left to right $M$, $\mathcal{N}(M)$ and $\mathcal{N}^2(M)$} \label{tab:algoTable}
\end{table}

\subparagraph{Example:}  As mentioned before, the matrix $M$ in the left of the \autoref{tab:algoTable} represents the simplicial complex $K$ in the left of \autoref{fig:threecomplexes}. We go through the rows first, rows $a$ and $c$ are subsets of row $b$ and row $f$ is a subset of $e$. Removing rows $a$, $c$ and $f$ and transposing $M$ yields the adjacency matrix $\mathcal{N}(M)$ of $\mathcal{N}(K)$ in the middle. Now, row $\sigma_1$ is a subset of $\sigma_2$ and of $\sigma_3$, and $\sigma_5$ is a subset of $\sigma_2$ and of $\sigma_4$. We remove these two rows of $\mathcal{N}(M)$ and transpose $\mathcal{N}(M)$ so as to get $\mathcal{N}^2(M)$ (the rightmost matrix of \autoref{fig:threecomplexes}), which corresponds to the core drawn in blue in \autoref{fig:threecomplexes}. 

\subparagraph{Domination test: } \label{para:AnPntView} Now we explain in more detail how to detect  the rows that need to be removed.
Let $v$ be a row of $M$ and $\sigma_v$ be the associated simplex in $\mathcal{N}(K)$. 
If $\sigma_v$ is not a maximal simplex of $\mathcal{N}(K)$, it is a proper face of some maximal simplex $\sigma_{v'}$ of $\mathcal{N}(K)$. Equivalently, the row $v'$ of $M$ that is associated with $\sigma_{v'}$ contains row $v$ in the sense that the non zero elements of $v$ appear in the same columns as the non zero elements of $v'$. We will say that row $v$ is dominated by row $v'$ and determining if a row is dominated by another one will be called the row domination test.
% To check if $\sigma_v$ is not a maximal simplex of $\mathcal{N}(K)$, in which case $v$ 
% needs to be removed from $M$, we look for another row $v^'$ of $M$ whose set of non-zero elements contains all the non-zero elements of $v$. Equivalently,  $\sigma_v $ is a proper face of $\sigma_{v'}$ and therefore is not maximal.
% that corresponds to a maximal simplex of $\mathcal{N}(K)$ and contains the first row as a face (meaning that the first row represents a simplex of $\mathcal{N}(K)$ that is not maximal).
Notice that when a row $v$ is dominated by a row $v'$, the same is true for the associated vertices since
% a row to be removed  (corresponding to a vertex $v$) is a \textit{subset} of some other row (corresponding to a vertex $v^'$). Therefore
all the maximal simplices that contain vertex $v$ also contain vertex $v'$, which  is the criterion to determine if $v$ is dominated by $v'$ (See Remark 1 in Section ~\ref{Rem:equDef}). The algorithm %described above %which % computes the nerve $\mathcal{N}(K)$ 
removes all dominated rows and therefore all dominated vertices of $K$. 

After removing rows,  the algorithm removes  the columns that  are no longer maximal in $K$, which might happen since we removed some rows.  % These columns  Specifically, it removes the columns associated with maximal simplices of $K$ that contain dominated vertices (removed rows) and.
Removing a column may lead in turn to new dominated vertices and therefore new rows to be removed. 
% \blue{Changed} Above explanation of the algorithm links the two ways of looking at the Strong collapse, namely via the \textit{nerve} and via the \textit{dominated vertex}. 
When the algorithm stops, there are no rows to be removed and we have obtained the core $K^c$ of the complex $K$. Note that the algorithm provides a constructive proof of \autoref{Thm:NerveSquare}. 

Removing columns is done in very much the same way:  we just exchange the roles of rows and columns. 

% For simplicity, in what follows,  we will say that a row $v$ of $M$ is \textit{dominated} by another row $v'$ of $M$ if the vertex \right  $v$ is dominated by the vertex associated with $v'$. Determining  if a row is dominated by another row will be called a \textit{domination test}. And similarly for columns when we exchange the roles of rows and columns.

\subparagraph{Computing the retraction map \textbf{$r$}:} 
The algorithm  also provides a direct way to compute the retraction map $r$ defined in Section ~\ref{map:retmap}. 
The retraction map corresponding to the strong collapses executed by the algorithm can be constructed as follows. A row $r$ being removed in $M$ corresponds to a dominated vertex in $K$ and the row which \textit{contains} $r$ corresponds to a dominating vertex. Therefore we map the dominated vertex to the dominating vertex %, see Section \ref{map:retmap}, 
and compose all such maps to get  the final retraction map from $K$ to its core $K^c$. The final map is simplicial as well, as it is a composition of simplicial maps.

\subparagraph{Reducing the number of domination tests:} %Efficient Implementation:} 
We first observe that, when one wants to determine if a row $v$ is dominated by some other row, we don't need to test $v$ with all other rows but with at most $d$ of them. Indeed, at most $d+1$ rows can intersect a given column since a simplex can have at most $d+1$ vertices.
For example, in \autoref{tab:algoTable} (Left), to check if  row e (highlighted in \textcolor{brown}{brown}) is dominated by another row, we pick the first non-zero column $\sigma_2$ (highlighted in Gray) and compare e with the non-zero entries \{b\} of $\sigma_2$. 

A second observation is that we don't need to test all rows and columns for domination, but only the so-called candidate rows and columns.
%\red{After each \textit{iteration} not all the remaining rows or columns correspond to a dominated vertex or a non-maximal simplex respectively.???}  
We define a row $r$ to be a \textbf{candidate row} for the next iteration if at least one column containing one of the non-zero elements of $r$ has been removed in the previous column removal iteration. Similarly, by exchanging the roles of rows and columns, we define the \textbf{candidate columns}. Candidate rows and columns are the only rows or columns that need to be considered in the \textit{domination} tests of the algorithm.  Indeed, a column $\tau$ of $M$ whose non-zero elements all belong to rows that are present from the previous \textit{iteration} cannot be dominated by another column $\tau'$ of $M$, since $\tau$ was not dominated at the previous iteration and no new non-zero elements have {ever been added by the algorithm}. % This ensures that no new column can appear which will contain the column in question.
The same argument follows for the candidate rows. 
 
We maintain two \textit{queues}, one for the candidate columns (colQueue) and one for the candidate rows (rowQueue). These queues are implemented as First in First out (FIFO) queues. %They act like an iterator for 
At each iteration, we \textit{pop out}  a candidate row or column from its respective queue and test whether it is dominated or not. After each successful domination test, we \textit{push} the candidate columns or rows in their appropriate queue in preparation for the subsequent iteration. In the first iteration, we \textit{push} all the rows in rowQueue and then alternatively use  colQueue and rowQueue. Algorithm \autoref{alg:core} gives the pseudo code of our algorithm.

\begin{algorithm}
    \caption{Core algorithm}
    \label{alg:core}
    \begin{algorithmic}[1] % The number tells where the line numbering should start
        \Procedure{Core}{$M$} \Comment{ Returns the matrix corresponding to the core of $K$}
			\State $rowQueue \gets \text{ \textit{push} all rows of M (all vertices of K)}$
			\State $colQueue \gets \text{empty}$         
            \While{ rowQueue is not empty} %\Comment{We have the core if there are no row to be removed}
				%\For{rows $v$ in rowQueue }
				\State $v \gets \textit{pop}(rowQueue)$ 					
					\State $\sigma \gets \text{the first non-zero column of } v $
					\For{non-zero rows $w$ in $\sigma$}
						\If{ $v$ is a subset of $w$}
							\State \text{Remove $v$ from $M$}
							\State \text{\textit{push} all non-zero columns $\tau$ of $v$ to $colQueue$ if not pushed before} 
							\State \text{break}
						\EndIf
					\EndFor  
			\EndWhile
			\While{ colQueue is not empty}
				\State $\tau \gets \textit{pop}(colQueue)$ 					
				\State $v \gets \text{the first non-zero row of } \tau $	
					\For{non-zero columns $\sigma$ in $v$}
						\If{ $\tau$ is subset of $\sigma$}
							\State \text{Remove $\tau$ from $M$}
							\State \text{\textit{push} all non-zero rows $w$ of $\tau$ to $rowQueue$ if not pushed before} 
							\State \text{break}
						\EndIf
					\EndFor            	
            \EndWhile\label{coreendwhile}
            \If{ rowQueue is not empty}
            		\State \text{ GOTO $4$} 
            	\EndIf	
            \State \textbf{return} $M$\Comment{The core consists of the remaining rows and columns}
        \EndProcedure
    \end{algorithmic}
\end{algorithm}

\subparagraph{Time Complexity: } The most basic operation in our algorithm is to determine if a row is dominated by another given row, and similarly for columns. In our implementation,  the rows (columns) of the matrix that are considered by the algorithm are stored as sorted lists.
%the returned lists of non-zero row (column) indices of the sparse matrix are sorted. 
Checking if one sorted list is a subset of another sorted list can be done  in time $\mathcal{O}(l)$, where $l$ is  the size of the longer list.
Note that the length of a row list is at most $\Gamma_0$ where $\Gamma_0$ denotes an upper bound on the number of maximal simplices incident to a vertex. The length of a column list is at most $d+1$ where $d$ is the dimension of the complex. Hence
checking if a row is dominated by another row takes $\mathcal{O}(\Gamma_0)$ time and
checking if a column is dominated by another column takes $\mathcal{O}(d)$ time.

At each iteration on the rows (Lines 7-13 of Algorithm ~\ref{alg:core}), each row is checked against at most $d$ other rows (since a maximal simplex has at most $d+1$ vertices), and at each iteration of the columns  (Lines 18-24 of Algorithm ~\ref{alg:core}), each column is checked against at most $\Gamma_0$ other columns (since a vertex can belong to at most $\Gamma_0$ maximal simplices). Since, at each iteration on the rows, we  remove at least one row, the total number of iterations on the rows is at most $O(v^2)$, where $v$ is the total number of vertices of the complex $K$. Similarly,  at each iteration on the columns, we remove at least one column and the total number of iterations on columns is $O(m^2)$,  where $m$ is the total number of maximal simplices of the complex $K$. The worst-case time complexity of our algorithm is therefore  $\mathcal{O}(v^2\Gamma_0d + m^2\Gamma_0d)$.
{In practice, $m$ is much smaller than $n$, the total number of simplices,   and $\Gamma_0$ is much smaller than $\Gamma$, the {maximum } number  of simplices incident on a vertex.   Typically $\Gamma$ grows exponentially with $d$ while $\Gamma_0$ remains almost constant as $d$ increases. See Table 5 in~~\cite{gamma} and related results in~~\cite{gamma2}, and also the plots in Section~\ref{sec:experiments}.}

\section{Strong collapse of a sequence of simplicial complexes } \label{SCZG}
In this section, we will present our main result that the persistence homology of a sequence of simplicial complexes is preserved under strong collapse.
To be able to present our main result, we need to begin with some brief background on zigzag persistence. Readers interested in more details can refer to \cite{CarlssonZigzag,MorozovZigzag,quivers}. 

A \textit{sequence of simplicial complexes} $\mathcal{Z}$ : $\{K_1 \xrightarrow{f_1} K_2 \xleftarrow{g_2} K_3 \xrightarrow{f_3} \cdots \xrightarrow{f_{(n-1)}} K_n\}$ is a sequence of complexes $K_i$s connected through simplicial maps  $\xrightarrow{f_i}$s and $\xleftarrow{g_j}$s. In the most general case, the maps are in both directions ${\rightarrow, \leftarrow}$ and the sequence is called a \textbf{zigzag sequence}.  If all the maps are forward,  i.e. it consists only of the $f_i$s, the sequence is called a simplicial \textbf{tower}.  If all the maps are inclusions, we have a sequence of nested simplicial complexes  called a \textbf{filtration}. Our results apply to all types of sequences and will be described for zigzag sequences.

Once we compute the homology classes of all $K_i$s, we get the sequence $\mathcal{P}(\mathcal{Z})$ : $\{H_p(K_1) \xrightarrow{f_1^*} H_p(K_2) \xleftarrow{g_2^*} H_p(K_3) \xrightarrow{f_3^*} \cdots \xrightarrow{f_{(n-1)}^*} H_p(K_n)\}$. Here $H_p(-)$ denotes the homology class of dimension $p$ with coefficients from a field $\mathbb{F}$ and $*$ denotes an induced homomorphism. $\mathcal{P}(\mathcal{Z})$ is a sequence of vector spaces connected through homomorphisms, called a \textbf{zigzag module}. More formally, a \textit{zigzag module} $\mathbb{V}$ is a sequence of vector spaces $\{V_1 \xrightarrow{} V_2 \xleftarrow{} V_3 \xrightarrow{} \cdots \xleftrightarrow{} V_n\}$ connected with homomorphisms $\{\xrightarrow{}, \xleftarrow{} \}$ between them. A  zigzag module arising from a sequence of simplicial complexes captures the evolution of the topology of the sequence. 

For two integers $b$ and $d$, $1 \leq b \leq d \leq n$; we can define an \textbf{interval module} $\mathbb{I}[b,d]$ by assigning $V_i$ to $\mathbb{F}$ when $i \in [b,d]$, and null spaces otherwise, the maps between any two $\mathbb{F}$ vector spaces is identity and is zero otherwise. For example $\mathbb{I}[2,4] : \{0 \xrightarrow{0} \mathbb{F} \xleftarrow{I} \mathbb{F} \xrightarrow{I} \mathbb{F} \xleftarrow{0}  0 \xrightarrow{0}  0\}$, here $n = 6$. Any zigzag module can be \textit{decomposed} as the direct sum of \textit{finitely} many interval modules, which is unique upto the permutations of the interval modules ~\cite{CarlssonZigzag}. The multiset of all the intervals $[b_j, d_j]$ corresponding to the interval module decomposition of any zigzag module is called a \textbf{zigzag (persistence) diagram}. The zigzag diagram completely characterizes the zigzag module, that is, there is bijective correspondence between them \cite{CarlssonZigzag, CarlssonZomorodian}.

Two different zigzag modules $\mathbb{V} : \{V_1 \xrightarrow{} V_2 \xleftarrow{} V_3 \xrightarrow{} \cdots \xleftrightarrow{} V_n\}$ and $\mathbb{W} : \{W_1 \xrightarrow{} W_2 \xleftarrow{} W_3 \xrightarrow{} \cdots \xleftrightarrow{} W_n\}$, connected through a set of homomorphisms $\phi_i: V_i \rightarrow W_i$ are \textbf{equivalent} if the $\phi_i$s are isomorphisms and the following diagram commutes ~\cite{CarlssonZigzag,quivers}.
\begin{center}
 \begin{tikzcd}
V_1 \arrow{r}{} \arrow{d}{\phi_1} & V_2 \arrow{d}{\phi_2} & V_3 \arrow{l}{} \arrow{d}{\phi_3} &
 \cdots &  \arrow{r} & V_{n-1} \arrow{r}{} \arrow{d}{\phi_{n-1}} & V_n \arrow{d}{\phi_n} \\
 W_1 \arrow{r}{} & W_2 & W_3 \arrow{l}{} & \cdots & \arrow{r} & W_{n-1} \arrow{r}{} & W_n
 \end{tikzcd}
 \end{center}
Note that the \textit{length} of the modules and the directions of the arrows in them should be consistent. Two equivalent zigzag modules will have the same interval decomposition, therefore the same zigzag diagram.

\subparagraph{Strong collapse of the zigzag module:} Given a zigzag sequence $\mathcal{Z}$ : $\{K_1 \xrightarrow{f_1} K_2 \xleftarrow{g_2} K_3 \xrightarrow{f_3} \cdots \xrightarrow{f_{(n-1)}} K_n\}$. We define the \textbf{core sequence} $\mathcal{Z}^c$ of $\mathcal{Z}$ as $\mathcal{Z}^c$ : $\{K_1^c \xrightarrow{f_1^c} K_2^c \xleftarrow{g_2^c} K_3^c \xrightarrow{f_3^c} \cdots \xrightarrow{f_{(n-1)}^c} K_n^c\}$. Where $K_i^c$ is the core of $K_i$. The forward maps are defined as, $f_j^c := r_{j+1}f_ji_j$; and the backward maps are defined as $g_j^c := r_{j}g_ji_{j+1}$. The maps $ i_j : K_j^c \hookrightarrow K_j $  and $ r_j : K_j \rightarrow K_j^c$ are the composed inclusions and the retractions maps defined in Section ~\ref{map:retmap} respectively. We call the procedure of forming the core sequence using the cores and the induced simplicial maps as \textbf{core-assembly}.

\begin{theorem} \label{thm:equivalent_module}
Zigzag modules $\mathcal{P}(\mathcal{Z})$ and $\mathcal{P}(\mathcal{Z}^c)$ are equivalent.
\end{theorem}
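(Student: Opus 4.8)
The plan is to take $\phi_i := (r_i)_*$, the homomorphism induced on $H_p(-)$ by the retraction $r_i : K_i \to K_i^c$, and to verify the two defining requirements of equivalence: that each $\phi_i$ is an isomorphism, and that every square of the ladder diagram commutes.

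First I would establish the two crucial identities at the level of induced maps, namely $(r_i)_* \circ (i_i)_* = \mathrm{id}_{H_p(K_i^c)}$ and $(i_i)_* \circ (r_i)_* = \mathrm{id}_{H_p(K_i)}$. The first comes from the simplicial identity $r_i \circ i_i = \mathrm{id}_{K_i^c}$: each elementary retraction fixes every vertex except the one it deletes, so the deleted (dominated) vertices never appear in the image of $i_i$ and the composite restricts to the identity on the core. The second is the homotopy statement, and it is here that the single-collapse result must be promoted to the full retraction. For one elementary strong collapse, \cite[Proposition 2.9]{StrongHomotopy} states that $i \circ r$ is contiguous, hence by \cite[Theorem 12.5]{Munkres} homotopic, to the identity, so it induces the identity on homology. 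Writing $r_i = r^{(p)} \cdots r^{(1)}$ and $i_i = i^{(1)} \cdots i^{(p)}$ as the telescopes of the elementary retractions and inclusions through the intermediate complexes $K_i = K^{(0)} {\searrow\searrow} \cdots {\searrow\searrow} K^{(p)} = K_i^c$, each pair $\big((i^{(k)})_*, (r^{(k)})_*\big)$ is a mutually inverse pair of isomorphisms on $H_p(-)$, so the composite $(i_i)_* (r_i)_*$ equals the identity. Either way, $\phi_i = (r_i)_*$ is an isomorphism with inverse $(i_i)_*$, settling the first requirement.

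The commutativity of the squares then follows from these identities together with functoriality of $H_p(-)$. For a forward arrow, substituting the definition $f_j^c = r_{j+1} f_j i_j$ gives
\[
(f_j^c)_* \circ \phi_j = (r_{j+1})_* (f_j)_* (i_j)_* (r_j)_* = (r_{j+1})_* (f_j)_* = \phi_{j+1} \circ (f_j)_*,
\]
where the middle equality is exactly $(i_j)_*(r_j)_* = \mathrm{id}_{H_p(K_j)}$. The backward arrows are handled identically: from $g_j^c = r_j g_j i_{j+1}$ one obtains $(g_j^c)_* \circ \phi_{j+1} = (r_j)_* (g_j)_* (i_{j+1})_* (r_{j+1})_* = (r_j)_* (g_j)_* = \phi_j \circ (g_j)_*$, again collapsing $(i_{j+1})_*(r_{j+1})_*$ to the identity. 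Since the $\phi_i$ are isomorphisms and every square commutes, $\mathcal{P}(\mathcal{Z})$ and $\mathcal{P}(\mathcal{Z}^c)$ are equivalent by definition.

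I expect the main obstacle to be the identity $(i_i)_* (r_i)_* = \mathrm{id}$ for the \emph{composite} retraction: the cited proposition only supplies a contiguity, hence a homotopy, for a single elementary collapse, and contiguity is not preserved under composition of maps. The clean way around this is to pass to homology immediately, where each elementary pair becomes an honest inverse isomorphism and composition is unproblematic, rather than trying to assemble a single global homotopy from $K_i$ to $K_i^c$. A secondary point to get right is the index bookkeeping of the telescope of intermediate complexes, so that the simplicial identity $r_i i_i = \mathrm{id}$ and the induced-map identities are attached to the correct indices; once that is in place, the rest is routine diagram chasing.
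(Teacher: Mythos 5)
Your proposal is correct and takes essentially the same route as the paper: both choose $\phi_j = (r_j)_*$, reduce everything to the fact that $i_j r_j$ induces the identity on $H_p(K_j)$, and obtain commutativity of the squares from the definitions $f_j^c = r_{j+1}f_j i_j$, $g_j^c = r_j g_j i_{j+1}$ by functoriality. The only divergence is how the composite of elementary collapses is treated: the paper works at the simplicial level, asserting that $i_j r_j$ is contiguous to $\mathrm{id}_{K_j}$ and composing contiguities before passing to homology (strictly speaking the composite is only connected to the identity by a chain of contiguities, i.e.\ lies in its contiguity class, which still yields the needed homotopy), whereas you pass to homology at each elementary step and telescope the mutually inverse pairs $\bigl((i^{(k)})_*,(r^{(k)})_*\bigr)$ there --- a cleaner handling of exactly the non-transitivity issue you flagged, reaching the same conclusion.
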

\begin{proof} Consider the following diagram
\begin{center}
\begin{tikzcd}
K_1 \arrow{r}{f_1} \arrow{d}{r_1} & K_2 \arrow{d}{r_2} & K_3 \arrow{l}{g_2} \arrow{d}{r_3} & 
 \cdots &  \arrow{r} & K_{n-1} \arrow{r}{f_{n-1}} \arrow{d}{r_{n-1}} & K_n \arrow{d}{r_n} \\
 K_1^c \arrow{r}{f_1^c} & K_2^c & K_3^c \arrow{l}{g_2^c} & \cdots & \arrow{r} & K_{n-1}^c \arrow{r}{f_{n-1}^c} & K_n^c
 \end{tikzcd}
\end{center}
and the associated  diagram after computing the $p$-th homology groups
 \begin{center}
 \begin{tikzcd}
 H_p(K_1) \arrow{r}{f_1^*} \arrow{d}{r_1^*} & H_p(K_2) \arrow{d}{r_2^*} & H_p(K_3) \arrow{l}{g_2^*} \arrow{d}{r_3^*} &
 \cdots &  \arrow{r} & H_p(K_{n-1}) \arrow{r}{f_{n-1}^*} \arrow{d}{r_{n-1}^*} & H_p(K_n) \arrow{d}{r_n^*} \\
 H_p(K_1^c) \arrow{r}{(f_1^c)^*} & H_p(K_2^c) & H_p(K_3^c) \arrow{l}{(g_2^c)^*} & \cdots & \arrow{r} & H_p(K_{n-1}^c) \arrow{r}{(f_{n-1}^c)^*} & H_p(K_n^c)
 \end{tikzcd}
 \end{center}
Since there exists a strong deformation retract between $r_j$ and $i_j$, the induced homomorphisms $r_j^*$ and $i_j^*$ are isomorphisms \cite[Corollary 2.11]{Hatcher}. Also, $f_j^cr_j = r_{j+1}f_ji_jr_j$ is contiguous to $r_{j+1}f_j$, since $i_jr_j$ is contiguous to the identity on $K_j$ and contiguity is preserved under composition, see ~~\cite[Proposition 2.9]{StrongHomotopy} and similarly $g_j^cr_{j+1}$ is contiguous to $ r_jg_j$. Now, since contiguous maps are homotopic at the level of geometric realization and homotopic maps induce the same homomorphism, we have $(f_j^cr_j)^* = (r_{j+1}f_j)^*$ and thus $(f_j^c)^*r_j^* = r_{j+1}^*f_j^*  $ and similarly $(g_j^c)^*r_{j+1}^* = r_j^*g_j^*$, see ~\cite[Proposition (1) page 111]{Hatcher}. Therefore all the squares in the lower diagram commute and the set of maps $r_j^*$s are isomorphisms, therefore $\mathcal{P}(\mathcal{Z})$ and $\mathcal{P}(\mathcal{Z}^c)$ are \textit{equivalent} and hence their zigzag diagrams are identical.
\end{proof}

\noindent{\bf Remark.} The above result can be extended to  multidimensional persistence using the more general notion of quiver representation ~\cite{quivers}.

\subparagraph{Approximation of the persistence diagram:}  The complexes in the sequence $\mathcal{Z}$ are usually associated to different real values of a scale parameter. % In filtrations, the \textit{filtration value} of a simplex is the value of the parameter at which it appears for the first time.
For example, in  a Rips-Vietoris (RV) filtration, the filtration value of a simplex is the length of the longest edge of the simplex and the filtration consists of a  sequence of elementary inclusions (i.e. inclusion of a single simplex) with increasing \textit{filtration values}. A persistent pair then associates % the persistence diagram can be interpreted as the pairing of 
the filtration value of a simplex that creates a homology cycle with the filtration value of the simplex that kills the cycle. The \textit{length} of the persistence pair is the difference between the two filtration values.

One could choose to strong collapse the complexes after each such inclusion and, as per Theorem \ref{thm:equivalent_module}, the persistence diagram of the core sequence $\mathcal{Z}^c$ would be the same as of $\mathcal{Z}$. 
%
%However, this would not be the best strategy to adopt as the number of such inclusions are usually very high and one would spend good amount of time in collapsing the complexes and assembling them to form the core sequence.\\
%
A more efficient usage of our method is to strong collapse the complexes less often, i.e. after several inclusions rather than just one. This will result in a faster algorithm but comes with a cost: the computed PD is only approximate. % In our experiments, the input sequence is a nested sequence (filtration) of Vietoris-Rips (VR) complexes associated to a set of increasing values of the scale parameter
We call \textbf{snapshots}  the values of the scale parameter at which we choose to strong collapse the complex. The difference between two consecutive snapshots is called a \textbf{step}. We approximate the \textbf{filtration value} of a simplex as the value of the snapshot at which it first appears. It is not hard to see that our algorithm will report all persistence pairs that are separated by at least one snapshot. Hence if all steps are equal to some $\epsilon>0$, we will compute all the persistence pairs  whose lengths are at least $\epsilon$. It follows that the  bottleneck distance between  the computed PD and the exact one is at most $\epsilon$.

 %The choice of snapshots dictates the performance and the quality of computed PD. Sparse snapshots will lead to  faster computation but to coarser PD where  the points of persistence less than the interval between two snapshots have been removed.
%On the other hand, choosing denser snapshots will lead to a comparatively slower algorithm but will provide more refined PD. 

\section{Computational experiments}
\label{sec:experiments}
 % We quantify the reduction in the size of filtrations and the computation time of their persistence diagrams using strong collapses. 
In this section, we present some computational experiments to showcase the efficiency achieved by our approach. For our experiments, we choose two synthetic datasets and four real datasets. The algorithms to strong collapse a simplicial complex (Algorithm~\ref{alg:core}) and to form the core sequence (core-assembly) have been coded in C++ and will be available as an open-source package of a next release of the Gudhi library~\cite{Gudhi}. The code has been compiled using the compiler `clang-900.0.38' and all computations were performed on a `2.8 GHz Intel Core i5' machine with 16 GB of available RAM.

The experiments in this paper are limited to RV filtrations, by far the most commonly used type of sequences in Topological Data Analysis.  We intend to experiment on Zigzag sequences in a companion paper.

% As explained in Section \ref{SCZG}, 
For each data set,  % the input sequence is a nested sequence (filtration) of Vietoris-Rips complexes associated to a set of increasing values of the scale parameter.
we select a number of snapshots and \textit{independently} strong collapse all  the complexes associated to these snapshots. We  then assemble the resulting individual \textit{cores} using the induced simplicial maps introduced in Section~\ref{SCZG}. The resulting core sequence with induced simplicial maps between the collapsed complexes is in general a simplicial tower we call the {\em core tower}. We then convert the core tower into an equivalent filtration using the Sophia software ~\cite{Sophia}, { which implements the algorithm described by Kerber and Schreiber in ~\cite{HannaTower}}. Finally, we run the persistence algorithm of the Gudhi library ~\cite{Gudhi} to obtain the persistence diagram (PD) of the {equivalent} filtration. %By the results of Section~\ref{SCZG}, the obtained {PD} is the same as the PD of the initial sequence.
The total time to compute the PD of the core sequence is the sum of three terms: 1. the \textit{maximum} time taken to collapse all the individual complexes {(assuming they are computed in parallel)}, 2. the time taken to assemble the individual \textit{cores} to form the core tower, 3. the time to compute the persistent diagram of the core tower. \autoref{tab:filtCollapse} summarises the results of the experiments. In both cases, the original filtration and the core tower, we use Gudhi through Sophia using the command \text{<./sophia -cgudhi inputTowerFile outputPDFile>}. When we use the -cgudhi option, Sophia reports two computation times. The first one is the total time taken by Sophia which includes (1) reading the tower, (2) transforming it to a filtration and (3) computing PD using Gudhi. The second reported time is just the time taken by Gudhi to compute PD. In our comparisons, we just report the time taken by Gudhi for the original filtration,  while, for the core tower, we report the total time taken by Sophia.

\begin{table}
\begin{center}
\scalebox{1.15}{
  \begin{tabular}{| l | l | l | l | l | l |  l | l | l | l |}
    \hline
    $\mathcal{X}$ 	&	$1$-sphere  	&	$2$-Annulus 	&	dragon  & netw-sc   & senate  	&	eleg    	\\ \hline
% $[b,s,e]$ 	&	 $[0.1, 0.005, 0.5]$ 	&	 $[0.1, 0.005, 0.5]$ 	&	 $[0, 0.001, 0.046]$ 	&	 $[0.1, 0.05, 3.5]$        	&	 $[0, 0.001, 0.107]$        	&	 $[0, 0.001, 0.077 ]$   	\\ \hline								 		
 		Snp 	&	80	&	80	&	46	&	69	&	107	&	77	\\ \hline
 		Flt($10^6$) 	&	\red{0.12}	&	\red{13.91}	&	\red{7.96}	&	\red{22.35}	&	\red{2.56}	&	\red{1.18}	\\ \hline
 		Twr 	&	54	&	252	&	1,641	&	380	&	104	&	298	\\ \hline
 		EqF 	&	\blue{573}	&	\blue{1,954}	&	\blue{8,437} 	&	\blue{957}	&	\blue{270}	&	\blue{431}	\\ \hline  
		Flt/EqF($10^3$)	&	0.21	&	7.12	&	0.94	&	23.35	&	9.48	&	2.74	\\ \hline \hline		
 		PDF 	&	 \red{0.65} 	&	 \red{174.18} 	&	 \red{69.92} 	&	 \red{243.86} 	&	 \red{24.92}  	&	 \red{10.87}  	\\ \hline
 		MCT 	&	0.005	&	0.022	&	0.065	&	0.009	&	0.003	&	0.002	\\ \hline
 		AT 	&	0.045	&	0.136	&	0.408	&	0.078	&	0.06	&	0.157	\\ \hline
 		PDT 	&	0.01 	&	0.02 	&	0.08 	&	0.01 	&	0.005	&	0.006	\\ \hline
 		Total 	&	 \blue{0.060}   &	 \blue{0.178} 	&	 \blue{0.553} &	 \blue{0.097} 	&	  \blue{0.068} 	&	 \blue{0.165}   	\\ \hline  	
 		PDF/Total	&	10.8	&	978.5	&	126.4	&	2514.0	&	366.5	&	65.9	\\ \hline	
  \end{tabular}}
 \end{center}
\caption{The rows are, from top to bottom: dataset $\mathcal{X}$, number of snapshots (snp), total number of simplices in the original filtration (Flt) in millions, number of simplices in the collapsed tower (Twr), total number of simplices in the equivalent filtration (EqF), ratio of Flt and EqF (Flt/EqF) in thousands, PD computation time for the original filration (PDF), maximum collapse time (MCT), assembly time (AT), PD computation time of the tower (PDT), sum MCT+AT+PDT (Total), ratio of PDF and Total (PDF/Total). All times are noted in seconds. For the first three datasets, we sampled points randomly from the initial datasets and averaged the results over five trials.} \label{tab:filtCollapse}
\end{table} 
The dataset of the first column ($1$-sphere) of \autoref{tab:filtCollapse} consists of  $100$ random points sampled from a unit circle in dimension 2. The dataset of the second column ($2$-Annulus) consists of $150$ random points sampled from a two dimension annulus of radii $\{0.6,1\}$. For all the other experiments,  we use datasets from a publicly available repository \cite{NinaData}. These datasets have been previously used to benchmark different publicly available software computing PH~\cite{NinaPaper}. For the third experiment (\textit{dragon}), we randomly picked  $150$ points from the $2000$ points of the dataset \textbf{drag 2} of \cite{NinaData}. The fourth and fifth column respectively correspond to the dataset \textbf{netw-sc} and \textbf{senate} of \cite{NinaData}, here we used the distance matrix. The sixth column corresponds to the dataset \textbf{eleg} of \cite{NinaData}, and here again we used the distance matrix. The first three datasets are point sets in Euclidean space. For the other three, the distance matrices of the datasets  were available at $\cite{NinaData}$. The [initial value, step, final value] of the scale parameter are  $[0.1, 0.005, 0.5]$, $[0.1, 0.005, 0.5]$, $[0, 0.001, 0.046]$, $[0.1, 0.05, 3.5]$, $[0, 0.001, 0.107]$ and $[0, 0.001, 0.077 ]$ for the examples in Table \ref{tab:filtCollapse} (from left to right). The choices of these parameters are simplistic but reasonable: the final value ensures that the size of the original filtration is not too large and we can run experiments on our machine; steps are constant and small enough to keep the bottleneck distance small.  
For more detail about the datasets and the computation of the distance matrices of the last three datasets please refer to \cite{NinaPaper}. 

Noticeably, in our experiments, the computing time of our approach is reduced by 1 to 3 orders of magnitude, and  the gain increases with the size of the filtration. A similar reduction of 2 to 4 orders of magnitude is achieved for the number of simplices. In a separate set of experiments we have reported the total time not just maximum time, see Table ~\ref{tab:PH-Collapser}. In Table ~\ref{tab:PH-Collapser}, we have also mentioned the time to compute the RV complex corresponding to the largest scale parameter which contributes the most in the maximum time. One can observe from Table ~\ref{tab:PH-Collapser} that the total time is not directly proportional to the number of snapshots used. In fact, in most of the cases the total time stays within a factor of two of Rips-Comp-Time (maximum time). The only exception is the dataset \textit{senate}, which has comparatively fewer points: the time to compute the RV complex is relatively fast. However, here too the ratio is less than eight, which is much less than 403, the number of snapshots. This clearly indicates that in our approach the time to perform the strong collapses and the core assemblies  together with the time to compute the PD is much smaller than the time to compute the RV complex. This implies that one can refine the PD (through smaller steps) at a very low cost. %compared to the increase in the number of snapshots. 

The plots below count  the maximal simplices and the dimensions of the complexes across the filtration (in solid) and the collapsed tower (as dashed). \textcolor{blue}{Blue} and \textcolor{red}{red} correspond  respectively to the filtration and the collapsed tower of the data \textbf{netw-sc}. Similarly \textcolor{green}{green} and \textcolor{brown}{brown} correspond  respectively to the filtration and the collapsed tower of the data \textbf{senate}. Finally, \textbf{black} and \textcolor{cyan}{cyan} correspond to the filtration and the collapsed tower of the data \textbf{eleg} respectively. We can observe %a general trend 
that in all cases the number of maximal simplices never increases. Also they are far fewer in number compared to the total number of simplices. Observe that for the uncollapsed filtrations \textcolor{blue}{blue}, \textcolor{green}{green} and \textbf{black}, the dimension of the complexes increases quite rapidly with the snapshot index. {Another key fact to observe is that the dimension of the complexes in the corresponding core tower are much smaller than their counterparts in the filtration. This has a huge effect on the performances since the  total number of simplices depends exponentially on the dimension.\\

\begin{tikzpicture}[scale = 0.77]\label{tab:maxPlot}
\begin{axis}[
    title={Count of maximal simplices across filtrations},
    xlabel={Snapshot index},
    ylabel={Num of Maximal Simplices},
    xmin=0, xmax=110,
    ymin=-5, ymax=400,
    xtick={0, 15, 30, 45, 60, 75, 90, 105, 120},
    ytick={0, 50, 100, 150, 200, 250, 300, 350, 400},
    legend pos=north west,
    ymajorgrids=true,
    grid style=dashed,
]
 
\addplot [color=blue,
    ]
    coordinates { (1, 379) (2, 379) (3, 379) (4, 377) (5, 374) (6, 371) (7, 370) (8, 368) (9, 364) (10, 361) (11, 359) (12, 355) (13, 352) (14, 350) (15, 343) (16, 340) (17, 340) (18, 337) (19, 328) (20, 325) (21, 324) (22, 325) (23, 321) (24, 321) (25, 315) (26, 318) (27, 316) (28, 316) (29, 312) (30, 309) (31, 309) (32, 307) (33, 305) (34, 306) (35, 304) (36, 302) (37, 304) (38, 301) (39, 282) (40, 281) (41, 282) (42, 282) (43, 279) (44, 279) (45, 275) (46, 276) (47, 277) (48, 276) (49, 276) (50, 279) (51, 277) (52, 279) (53, 279) (54, 277) (55, 275) (56, 273) (57, 269) (58, 270) (59, 273) (60, 253) (61, 259) (62, 262) (63, 262) (64, 255) (65, 255) (66, 258) (67, 257) (68, 258) (69, 259)
     };
\addplot
	[color = red, dashed]   
 coordinates {(1, 379) (2, 379) (3, 379) (4, 377) (5, 373) (6, 369) (7, 368) (8, 357) (9, 350) (10, 342) (11, 339) (12, 338) (13, 329) (14, 329) (15, 319) (16, 318) (17, 314) (18, 311) (19, 270) (20, 270) (21, 265) (22, 265) (23, 256) (24, 256) (25, 256) (26, 247) (27, 247) (28, 242) (29, 237) (30, 237) (31, 237) (32, 237) (33, 237) (34, 226) (35, 226) (36, 226) (37, 225) (38, 224) (39, 148) (40, 148) (41, 148) (42, 148) (43, 144) (44, 143) (45, 143) (46, 143) (47, 143) (48, 143) (49, 143) (50, 143) (51, 143) (52, 141) (53, 141) (54, 141) (55, 141) (56, 141) (57, 141) (58, 141) (59, 141) (60, 87) (61, 87) (62, 87) (63, 87) (64, 87) (65, 87) (66, 82) (67, 82) (68, 82) (69, 82)
 };
\addplot [color=green,]
coordinates {(1, 103) (2, 103) (3, 103) (4, 103) (5, 103) (6, 103) (7, 103) (8, 103) (9, 103) (10, 103) (11, 103) (12, 102) (13, 102) (14, 102) (15, 102) (16, 102) (17, 102) (18, 102) (19, 102) (20, 102) (21, 102) (22, 102) (23, 102) (24, 102) (25, 102) (26, 102) (27, 102) (28, 102) (29, 102) (30, 102) (31, 102) (32, 102) (33, 102) (34, 102) (35, 102) (36, 102) (37, 102) (38, 101) (39, 101) (40, 101) (41, 101) (42, 101) (43, 101) (44, 100) (45, 100) (46, 100) (47, 100) (48, 100) (49, 100) (50, 99) (51, 99) (52, 99) (53, 99) (54, 100) (55, 99) (56, 100) (57, 100) (58, 99) (59, 100) (60, 100) (61, 98) (62, 99) (63, 99) (64, 101) (65, 102) (66, 103) (67, 104) (68, 103) (69, 103) (70, 105) (71, 105) (72, 104) (73, 105) (74, 105) (75, 104) (76, 105) (77, 105) (78, 107) (79, 105) (80, 108) (81, 106) (82, 104) (83, 101) (84, 100) (85, 104) (86, 107) (87, 102) (88, 104) (89, 105) (90, 112) (91, 110) (92, 100) (93, 98) (94, 99) (95, 99) (96, 101) (97, 107) (98, 104) (99, 104) (100, 107) (101, 107) (102, 110) (103, 111) (104, 107) (105, 110) (106, 110) (107, 111)
 }; 
\addplot [color = brown, dashed]
coordinates {(1, 103) (2, 103) (3, 103) (4, 103) (5, 103) (6, 103) (7, 103) (8, 103) (9, 103) (10, 103) (11, 103) (12, 102) (13, 102) (14, 102) (15, 102) (16, 102) (17, 102) (18, 102) (19, 102) (20, 102) (21, 102) (22, 102) (23, 102) (24, 102) (25, 102) (26, 102) (27, 102) (28, 102) (29, 102) (30, 102) (31, 102) (32, 102) (33, 102) (34, 102) (35, 102) (36, 102) (37, 102) (38, 101) (39, 101) (40, 101) (41, 101) (42, 101) (43, 101) (44, 100) (45, 100) (46, 99) (47, 99) (48, 99) (49, 98) (50, 98) (51, 96) (52, 96) (53, 96) (54, 95) (55, 93) (56, 92) (57, 90) (58, 87) (59, 91) (60, 91) (61, 89) (62, 88) (63, 88) (64, 88) (65, 87) (66, 88) (67, 84) (68, 84) (69, 83) (70, 84) (71, 84) (72, 79) (73, 78) (74, 78) (75, 75) (76, 72) (77, 71) (78, 71) (79, 70) (80, 69) (81, 68) (82, 65) (83, 65) (84, 64) (85, 63) (86, 62) (87, 62) (88, 61) (89, 60) (90, 58) (91, 55) (92, 53) (93, 53) (94, 53) (95, 53) (96, 52) (97, 52) (98, 51) (99, 50) (100, 49) (101, 48) (102, 47) (103, 46) (104, 46) (105, 45) (106, 48) (107, 48) 
};
\addplot[color = black,]
coordinates{ (1, 297) (2, 297) (3, 297) (4, 297) (5, 297) (6, 297) (7, 297) (8, 297) (9, 297) (10, 297) (11, 297) (12, 297) (13, 297) (14, 297) (15, 297) (16, 295) (17, 295) (18, 294) (19, 294) (20, 294) (21, 294) (22, 294) (23, 294) (24, 294) (25, 294) (26, 294) (27, 294) (28, 294) (29, 294) (30, 294) (31, 294) (32, 294) (33, 294) (34, 293) (35, 293) (36, 293) (37, 291) (38, 291) (39, 291) (40, 291) (41, 291) (42, 291) (43, 291) (44, 291) (45, 288) (46, 288) (47, 286) (48, 286) (49, 286) (50, 286) (51, 286) (52, 286) (53, 286) (54, 285) (55, 285) (56, 285) (57, 285) (58, 285) (59, 285) (60, 285) (61, 283) (62, 283) (63, 282) (64, 283) (65, 283) (66, 283) (67, 283) (68, 288) (69, 287) (70, 273) (71, 273) (72, 273) (73, 274) (74, 277) (75, 277) (76, 276) (77, 276) 
}; 	
\addplot[color = cyan, dashed]
coordinates{(1, 297) (2, 297) (3, 297) (4, 297) (5, 297) (6, 297) (7, 297) (8, 297) (9, 297) (10, 297) (11, 297) (12, 297) (13, 297) (14, 297) (15, 297) (16, 295) (17, 295) (18, 294) (19, 294) (20, 294) (21, 294) (22, 294) (23, 294) (24, 292) (25, 292) (26, 292) (27, 292) (28, 292) (29, 292) (30, 292) (31, 292) (32, 292) (33, 292) (34, 291) (35, 290) (36, 276) (37, 274) (38, 274) (39, 272) (40, 272) (41, 262) (42, 262) (43, 261) (44, 261) (45, 259) (46, 259) (47, 254) (48, 254) (49, 254) (50, 254) (51, 254) (52, 254) (53, 254) (54, 253) (55, 253) (56, 253) (57, 243) (58, 243) (59, 243) (60, 241) (61, 241) (62, 241) (63, 241) (64, 231) (65, 231) (66, 231) (67, 231) (68, 232) (69, 232) (70, 229) (71, 229) (72, 229) (73, 230) (74, 229) (75, 229) (76, 229) (77, 229)
};
\end{axis}
\end{tikzpicture}
\begin{tikzpicture}[scale = 0.77]\label{tab:dimPlot}
\begin{axis}[
    title={Dimension of the complex across filtrations},
    xlabel={Snapshot index},
    ylabel={Dimension of the complex},
    xmin=0, xmax=110,
    ymin=-1, ymax=20,
    xtick={0, 15, 30, 45, 60, 75, 90, 105, 120},
    ytick={-3,0,2,4,8,12,16,20},
    legend pos=north west,
    ymajorgrids=true,
    grid style=dashed,
]
 
\addplot [color=green,]
coordinates {(1, 0) (2, 0) (3, 0) (4, 0) (5, 0) (6, 0) (7, 0) (8, 0) (9, 0) (10, 0) (11, 0) (12, 1) (13, 1) (14, 1) (15, 1) (16, 1) (17, 1) (18, 1) (19, 1) (20, 1) (21, 1) (22, 1) (23, 1) (24, 1) (25, 1) (26, 1) (27, 1) (28, 1) (29, 1) (30, 1) (31, 1) (32, 1) (33, 1) (34, 1) (35, 1) (36, 1) (37, 1) (38, 1) (39, 1) (40, 1) (41, 1) (42, 1) (43, 1) (44, 1) (45, 1) (46, 1) (47, 1) (48, 1) (49, 1) (50, 2) (51, 2) (52, 2) (53, 2) (54, 2) (55, 2) (56, 2) (57, 2) (58, 2) (59, 2) (60, 2) (61, 3) (62, 3) (63, 3) (64, 3) (65, 3) (66, 4) (67, 4) (68, 4) (69, 5) (70, 5) (71, 5) (72, 5) (73, 6) (74, 6) (75, 6) (76, 7) (77, 7) (78, 8) (79, 8) (80, 8) (81, 8) (82, 8) (83, 9) (84, 10) (85, 10) (86, 10) (87, 10) (88, 10) (89, 11) (90, 11) (91, 11) (92, 12) (93, 13) (94, 13) (95, 13) (96, 13) (97, 13) (98, 14) (99, 15) (100, 15) (101, 15) (102, 15) (103, 15) (104, 17) (105, 18) (106, 18) (107, 18)
 }; 
\addplot [color = brown, densely dashed]
coordinates {(1, 0) (2, 0) (3, 0) (4, 0) (5, 0) (6, 0) (7, 0) (8, 0) (9, 0) (10, 0) (11, 0) (12, 0) (13, 0) (14, 0) (15, 0) (16, 0) (17, 0) (18, 0) (19, 0) (20, 0) (21, 0) (22, 0) (23, 0) (24, 0) (25, 0) (26, 0) (27, 0) (28, 0) (29, 0) (30, 0) (31, 0) (32, 0) (33, 0) (34, 0) (35, 0) (36, 0) (37, 0) (38, 0) (39, 0) (40, 0) (41, 0) (42, 0) (43, 0) (44, 0) (45, 0) (46, 0) (47, 0) (48, 0) (49, 0) (50, 0) (51, 0) (52, 0) (53, 0) (54, 0) (55, 0) (56, 0) (57, 0) (58, 0) (59, 1) (60, 1) (61, 1) (62, 1) (63, 1) (64, 2) (65, 1) (66, 2) (67, 2) (68, 2) (69, 2) (70, 2) (71, 2) (72, 1) (73, 1) (74, 1) (75, 0) (76, 0) (77, 0) (78, 0) (79, 0) (80, 0) (81, 0) (82, 0) (83, 0) (84, 0) (85, 0) (86, 0) (87, 0) (88, 0) (89, 0) (90, 0) (91, 0) (92, 0) (93, 0) (94, 0) (95, 0) (96, 0) (97, 0) (98, 0) (99, 0) (100, 0) (101, 0) (102, 0) (103, 0) (104, 0) (105, 0) (106, 1) (107, 1) 
};
\addplot[color = black,]
coordinates{ (1, 0) (2, 0) (3, 0) (4, 0) (5, 0) (6, 0) (7, 0) (8, 0) (9, 0) (10, 0) (11, 0) (12, 0) (13, 0) (14, 0) (15, 0) (16, 1) (17, 1) (18, 1) (19, 1) (20, 1) (21, 1) (22, 1) (23, 1) (24, 1) (25, 1) (26, 1) (27, 1) (28, 1) (29, 1) (30, 1) (31, 1) (32, 1) (33, 1) (34, 1) (35, 1) (36, 1) (37, 1) (38, 1) (39, 2) (40, 2) (41, 2) (42, 2) (43, 2) (44, 2) (45, 2) (46, 2) (47, 2) (48, 2) (49, 2) (50, 2) (51, 2) (52, 2) (53, 2) (54, 2) (55, 2) (56, 2) (57, 2) (58, 2) (59, 2) (60, 2) (61, 3) (62, 3) (63, 3) (64, 3) (65, 3) (66, 3) (67, 3) (68, 3) (69, 3) (70, 16) (71, 16) (72, 16) (73, 16) (74, 16) (75, 16) (76, 17) (77, 17)
}; 	
\addplot[color = cyan, densely dashed]
coordinates{(1, 0) (2, 0) (3, 0) (4, 0) (5, 0) (6, 0) (7, 0) (8, 0) (9, 0) (10, 0) (11, 0) (12, 0) (13, 0) (14, 0) (15, 0) (16, 0) (17, 0) (18, 0) (19, 0) (20, 0) (21, 0) (22, 0) (23, 0) (24, 0) (25, 0) (26, 0) (27, 0) (28, 0) (29, 0) (30, 0) (31, 0) (32, 0) (33, 0) (34, 0) (35, 0) (36, 0) (37, 0) (38, 0) (39, 0) (40, 0) (41, 0) (42, 0) (43, 0) (44, 0) (45, 1) (46, 1) (47, 1) (48, 1) (49, 1) (50, 1) (51, 1) (52, 1) (53, 1) (54, 1) (55, 1) (56, 1) (57, 1) (58, 1) (59, 1) (60, 1) (61, 1) (62, 1) (63, 1) (64, 1) (65, 1) (66, 1) (67, 1) (68, 1) (69, 1) (70, 2) (71, 2) (72, 2) (73, 2) (74, 2) (75, 2) (76, 2) (77, 2)
};
\addplot [color= blue, ]
  coordinates { (1, 0) (2, 0) (3, 0) (4, 0) (5, 0) (6, 0) (7, 0) (8, 0) (9, 0) (10, 0) (11, 0) (12, 1) (13, 1) (14, 1) (15, 1) (16, 1) (17, 1) (18, 1) (19, 1) (20, 1) (21, 1) (22, 1) (23, 1) (24, 1) (25, 1) (26, 1) (27, 1) (28, 1) (29, 1) (30, 1) (31, 1) (32, 1) (33, 1) (34, 1) (35, 1) (36, 1) (37, 1) (38, 1) (39, 1) (40, 1) (41, 1) (42, 1) (43, 1) (44, 1) (45, 1) (46, 1) (47, 1) (48, 1) (49, 1) (50, 2) (51, 2) (52, 2) (53, 2) (54, 2) (55, 2) (56, 2) (57, 2) (58, 2) (59, 2) (60, 2) (61, 3) (62, 3) (63, 3) (64, 3) (65, 3) (66, 4) (67, 4) (68, 4) (69, 5) };
\addplot	[color = red, densely dashed]   
 coordinates {(1, 0) (2, 0) (3, 0) (4, 0) (5, 0) (6, 0) (7, 0) (8, 0) (9, 0) (10, 0) (11, 0) (12, 0) (13, 0) (14, 0) (15, 0) (16, 0) (17, 0) (18, 0) (19, 0) (20, 0) (21, 0) (22, 0) (23, 0) (24, 0) (25, 0) (26, 0) (27, 0) (28, 0) (29, 0) (30, 0) (31, 0) (32, 0) (33, 0) (34, 0) (35, 0) (36, 0) (37, 0) (38, 0) (39, 0) (40, 0) (41, 0) (42, 0) (43, 0) (44, 0) (45, 0) (46, 0) (47, 0) (48, 0) (49, 0) (50, 0) (51, 0) (52, 0) (53, 0) (54, 0) (55, 0) (56, 0) (57, 0) (58, 0) (59, 1) (60, 1) (61, 1) (62, 1) (63, 1) (64, 2) (65, 1) (66, 2) (67, 2) (68, 2) (69, 2)
 };

\end{axis}
\end{tikzpicture}

Observations from the plots combined with the experimental results of Table ~\ref{tab:filtCollapse} clearly indicate that our method is extremely fast and memory efficient.

\subparagraph{Comparison with Ripser:}
In the above experiments the comparison was between computation time to compute the persistence diagram (PD), i.e. we didn't consider the time taken to compute the RV complex in both cases. Also, we used Gudhi to compute the PD of both the original filtration and the collapsed equivalent filtration. Now we present some experimental results comparing our approach with Ripser \cite{ripser}, which is the state of the art software to compute the PD of RV filtrations. We again used Gudhi to compute the PD of the collapsed sequence but here for a fair comparison we include the time taken to compute the RV filtration. We call our package to preprocess the initial sequence and construct the collapsed sequence as the \textit{PH-Collapser}. The comparison is done on the three datasets \textbf{netw-sc}, \textbf{senate} and \textbf{eleg} from \cite{NinaData}. The reported time is the total time taken by Gudhi, which includes the time taken to compute the entire RV filtration (at snapshot values) and the time taken to collapse all the subcomplexes and assemble their cores and to transform them into an equivalent filtration and then finally to compute the PD of the equivalent filtration. 

\begin{table}[]
\begin{center}
 \begin{tabular}{|c|c|c|c|c|c|c|c|}
\hline
\multirow{2}{*}{Data} & \multirow{2}{*}{Pnt} & \multirow{2}{*}{Threshold} & %
    \multicolumn{5}{c|}{PH-Collapser(Gudhi)} \\
%\cline{2-5}
% & \multicolumn{2}{c|}{Value} & \multicolumn{2}{c|}{Value} & \\
\cline{4-8}
 & & & Dim & Rips-Comp-Time & Total-Time & Steps & TotSnaps \\
\hline
netw-sc & 379 & 4.5 & \blue{41} & \blue{13s} & \blue{21s} & 0.02 & 213\\ %netw-sc
\hline
'' & '' & 5.5 & \blue{57}&  \blue{117s} & \blue{144s}  & 0.02 & 263 \\ %netw-sc
%\hline
%senate & 103 & 0.415 & \red{6} & \red{406.79s} &  &  & \\ %senate
\hline
senate & 103 & 0.415 &  \blue{54} & \blue{1.7s} & \blue{13.1s} & 0.001 & 403\\ %senate
%\hline
%eleg & 297 & 0.3 & \red{4} & \red{217s} & &   &\\ %eleg
\hline
eleg & 297 & 0.3 & \blue{105} & \blue{443s} & \blue{578.3s} & 0.001 & 284 \\ %eleg

\hline
% etc. ...
\end{tabular}
 \end{center}
\caption{The columns are, from left to right: dataset (Data), number of points (Pnt), maximum scale parameter (Threshold), dimension of the RV Complex (Dim), time taken to compute the RV complex (Rips-Comp-Time), total time taken by PH-Collapser (Gudhi) (Total-Time), incremental steps of subcomplexes (Steps) and total number of snapshots used (TotSnaps). All times are averaged over five trials except the last row (2 times).} \label{tab:PH-Collapser}
\end{table} 

Command  \text{<./ripser inputData --format distances --threshold inputTh --dim inputDim >} was used to run Ripser and we used the distance matrix format for all the datasets. Differently from PH-Collapser, Ripser needs a parameter --dim until which it computes the PD. For a given \textit{threshold} (maximum scale parameter) and \textit{dim}, Ripser basically computes the \textit{dim}-skeleton of the RV complex and then computes the PD of the skeleton. For the given threshold, we compute the complete RV complex until its full dimension. However, we only use the maximal faces to represent the complex which again saves a lot of memory and time. This allows us to compute the PD in all the dimensions, i.e until the dimension of the RV complex.

Table ~\ref{tab:PH-Collapser} contains the results of the experiments done using PH-Collapser and Table ~\ref{tab:Ripser} contains the results of Ripser. By comparing the tables, PH-Collapser clearly outperforms Ripser by a huge margin considering that we compute the persistence diagram until the full dimension. Ripser performs quite well for computing PD in low-dimensions, however as we move to intermediate dimensions it slows down quite considerably and in some cases (dimension above 7) the size of the complex is so huge that Ripser crashed due to memory overload. In Table ~\ref{tab:Ripser}, we provide the running time of Ripser with increasing dimension of the PD computed.

\begin{table}[h]
\begin{center}
 \begin{tabular}{|c|c|c|c|c|c|c|c|c|}
\hline
\multirow{2}{*}{Data} & \multirow{2}{*}{Pnt} & \multirow{2}{*}{Threshold}  
%\cline{2-5}
 & \multicolumn{2}{c|}{Val} & \multicolumn{2}{c|}{Val} & \multicolumn{2}{c|}{Val} \\
\cline{4-9}
  &  &  & Dim & Time & Dim & Time & Dim & Time \\
\hline
netw-sc & 379 & 4.5 & 4 & 3.8s & \red{5} & \red{21.5s} & \red{7} & \red{357s} \\ %netw-sc
\hline
'' & '' & 5.5 & 4 & 25.3s & \red{5} & \red{231.2s} & \red{6} & \red{$\infty$} \\ %netw-sc
\hline
senate & 103 & 0.415 & 3 & 0.52s & 4 & 5.9s & \red{5} & \red{52.3s}   \\ %senate
\hline
'' & '' & '' & \red{6} & \red{406.8s} & \red{7} & \red{$\infty$} & &   \\ %senate
\hline
eleg & 297 & 0.3 & 3 & 8.9s & 4 & 217s & \red{5} & \red{$\infty$} \\ %eleg
\hline
% etc. ...
\end{tabular}
 \end{center}
\caption{The columns are, from left to right: dataset (Data), number of points (Pnt), maximum scale parameter (Threshold), input dimension for Ripser (Dim), total time taken by Ripser (Time). Most results are averaged over five trials except the longer ones. $\infty$ in the Time column means that the experiment ran longer than 12hrs or crashed due to memory overload.} \label{tab:Ripser}
\end{table} 

%The datasets we considered here didn't have any non-trivial persistence pairs beyond dimension 4-5, however with our approach we were able to say it and until the maximum possible dimension. Whereas in most of the cases, Ripser couldn't compute even the Rips skeleton beyond dimension 7. 

As mentioned before, we define the filtration value of a simplex as the value of the snapshot parameter at which it appears for the first time, whereas in the case of Ripser it is the length of the longest edge (1-simplex) it contains. Therefore, the computed PD by PH-Collapser is not exactly the same as the one computed by Ripser, see Section~\ref{SCZG} (Approximation of the persistence diagram). However, in the above experiments, we choose steps that are very small so that the bottleneck distance between the two PD returned by Ripser and PH-collapser for a given data set is also very small. 

Note that the choice of \textit{snapshots} is arbitrary and could be done in a non uniform way after analyzing the distribution of the length of the edges in the Rips-complex at a relatively small increase in the total computing time.

\section{Discussion}
In this article, we presented a novel approach to compute the persistence homology of a sequence of simplicial complexes. Our approach is based on a technique called strong collapse that has been introduced by Barmak and Minian \cite{StrongHomotopy}. It works very well in pratice and, as shown using  publicly available data,  is extremely fast and memory efficient.  
%
% As shown in \cite[Example 2.13]{StrongHomotopy} a complex can be simply collapsible but not strongly collapsible and general it seems quite difficult to provide a theoretical bound on the amount of reduction achieved by strong collapse.
%In future works, we would like to understand the precise conditions under which our approach would work %and the amount of reduction it achieves. % An interesting path to explore could be to provide a probabilistic bounds in case of random complexes or when points are sampled from a distribution. 
%
% However, with our experience we are able to identify cases where we can't strongly collapse a complex. These cases are when sampled points are evenly distributed on a closed manifold without boundary. In these cases the Rips-complexes are minimal (cores) at all threshold parameter, to see this one can play with examples on a circle or sphere. Therefore, in these cases our approach may not be very efficient. We still don't have any concrete method to handle these cases, however perturbing the point set or adding noise could be few methods to try or possibly different simplicial complex, for example witness complex. Though, this should not be a problem for most practical cases, as the real-world data is rarely evenly distributed or even uniformly distributed and sampled from a closed boundless manifold. They usually come with potential noise and from a space more general than closed boundless manifold.
%
We believe that the solid mathematical foundations presented in \cite{StrongHomotopy}, its applicability to all kinds of sequences of simplicial complexes,  and the availability of the simple and efficient algorithms developed in this article, strong collapses will be immensely useful  to reduce the complexity of many problems in computational topology. 

On the theoretical side,  this work raises several questions. In particular, it would be nice to have  theoretical guarantees on the amount of reduction the algorithm can achieve. We intend to explore this and related issues in future work.

\subparagraph*{Acknowledgements.}

We want to thank Marc Glisse for his useful discussions, Mathijs Wintraecken for reviewing a draft of the article. We also want to thank Francois Godi and Siargey Kachanovich for their help with Gudhi and Hannah Schreiber for her help with Sophia and Ulrich Bauer for his help with Ripser.

\nocite{*}
\bibliography{mybib}{}
\end{document}